\newtheorem{theorem}{Theorem}[section]
\newtheorem{lemma}[theorem]{Lemma}
\newtheorem{corollary}[theorem]{Corollary}
\theoremstyle{definition}
\theoremstyle{remark}
\title{Two Species Evolutionary Game Model of User and Moderator Dynamics}
\author{
\IEEEauthorblockN{Christopher Griffin 
and Douglas Mercer}
\IEEEauthorblockA{Applied Resarch Laboratory and \\
Department of Mathematics\\
Penn State University\\
University Park, PA 16802\\
E-mail: \texttt{griffinch@ieee.org}}
\and
\IEEEauthorblockN{James Fan}
\IEEEauthorblockA{Dept. of Supply Chain and Logistics\\
Penn State University\\
University Park, PA 16802\\
E-mail: \texttt{juf187@psu.edu}}
\and\centering
\IEEEauthorblockN{Anna Squicciarini}
\IEEEauthorblockA{College of Info. Sci. and Tech.\\
Penn State University\\
University Park, PA 16802\\
E-mail: \texttt{asquicciarini@psu.edu}}
}
\begin{document}
\maketitle
 
\begin{abstract}
We construct a two species evolutionary game model of an online society consisting of ordinary users and behavior enforcers (moderators). Among themselves, moderators play a coordination game choosing between being ``positive'' or ''negative'' (or harsh) while ordinary users play prisoner's dilemma. When interacting, moderators motivate good behavior (cooperation) among the users through punitive actions while the moderators themselves are encouraged or discouraged in their strategic choice by these interactions. We show the following results: (i) We show that the $\omega$-limit set of the proposed system is sensitive both to the degree of punishment and the proportion of moderators in closed form. (ii) We demonstrate that the basin of attraction for the Pareto optimal strategy $(\text{Cooperate},\text{Positive})$ can be computed exactly. (iii) We demonstrate that for certain initial conditions the system is self-regulating. These results partially explain the stability of many online users communities such as Reddit. We illustrate our results with examples from this online system.
\end{abstract}

\section{Introduction}
Online social networks (OSNs) allow individuals to share not only information about themselves, but also information about their world. While traditional OSNs, such as Facebook, Myspace, and Google+, are designed for sharing personal information, new OSNs  with a focus on sharing all interesting information, personal or not, have become popular.  These OSNs, such as Reddit, Digg, and Funnyjunk, still allow users to create online connections through shared interests and experiences.  Much like traditional OSNs, these new OSNs are susceptible to user deception.

Reddit in particular rewards users for posting information with a form of social capital called karma.  Shared information, in the form of posts, is upvoted or downvoted by the community.  The user in turn accumulates karma based on how popular his or her posts are. While karma has no intrinsic value, it appears to be a desirable commodity within the Reddit community. Karma is a unique and concrete measure of social capital, a great tool to analyze online social behavior. One interesting component of Reddit is the communal agreement that honesty is highly preferred to dishonesty within the system, should be rewarded, and moderators actively enforce this policy through a variety of methods. In this paper, we seek to develop dynamics that lead to the unique properties found on Reddit: that is a self-stabilizing society in which moderators (or authority figures) act in as kind a way as possible with users committed to cooperation through honesty at the potential expense of karma.

\subsection{An Overview of Reddit}
From the perspective of users, lying without malice does not necessarily affect the holistic well-being of the Reddit social network. Instead, users identify ``trolls'' as problematic. A troll is a member of an online community whose contributions are intended to enrage or offend as many people as possible, as significantly as possible. In essence, trolls seek out opportunities to defend or propose indefensible, reprehensible positions in order to receive a negative reaction from other community members.

Many Reddit users will ignore trolls. However, in cases where a troll is not ignored, online altercations may occur leading to communal breakdown within Reddit. Beyond the use of karmic voting and individual interactions, Reddit enforces community standards by utilizing two layers of moderators. Specifically, subreddit\footnote{A subreddit is a specific discussion forum within Reddit} specific moderators enforce Reddit and subreddit guidelines by deleting posts and comments that were not already buried by downvotes. These are the moderators upon which the paper focuses. If these moderators are not strict enough, subreddits sympathetic to trolling can harbor and encourage trolling behavior, disrupting the community.

In this paper we model negative behavior within Reddit (or a similar online community) as the defect behavior in a classic prisoner's dilemma. We justify this assumption by noting that trolling is analogous to defecting, in the sense that defection is characterized by an action that goes against what is best for societal wellbeing for selfish reasons. In Section IV, where we explictly define our user-user payoff matrix, we will detail our justification for this assumption in further detail. 

\subsection{Paper Summary}
In this paper, we develop a simple evolutionary game that attempts to model the behavior of Reddit users and moderators and illustrates how the Reddit equilibrium can be reached; that is, an equilibrium in which most users agree to cooperatively share information and in which moderators beneficially interact with the system. We assume that ordinary users interact with each other playing a prisoner's dilemma style game, while moderators interact with each other playing a coordination game. The moderators' strategy space consists of the strategies, ``Positive'' and ``Negative,'' which attempts to capture their view of users, in particular when they are engaged in negative behavior. When a moderator and a user interact, moderators may or may not derive benefit from the interaction depending on the strategy of the user. A similar statement holds for the user.

Specifically, given our evolutionary game system discussed in the sequel, we show the following results: (i) We show that the $\omega$-limit set of the proposed system is sensitive both to the degree of punishment and the proportion of moderators in closed form. (ii) We demonstrate that the basin of attraction for the Pareto optimal strategy $(\text{Cooperate},\text{Positive})$ can be computed exactly. (ii) We demonstrate that for certain initial conditions the system is a regulating. These results partially explain the stability of many online users communities such as Reddit.

The remainder of this paper is organized as follows: In Section \ref{sec:LitRev}, we provide a brief literature review of deviance (or negative behavior) in online social environments and a review of the impact moderators have in this situation. In Section \ref{sec:BigModel} we layout our modeling approach and contrast it to the one in \cite{Hof96}. In Section \ref{sec:Model} we present our basic model of the system. In Section \ref{sec:EvolutionaryGame} we present our results on the dynamical system under consideration. We present future directions and conclusions in Section \ref{sec:Conclusion}.

\section{Literature Review}\label{sec:LitRev}
 Online deviance (defection, in our model) has been studied from the perspective of both social science and computer science. To date these two perspectives have not been adequately integrated. Using classical labeling and identity theories, several social researchers have focused on Internet users' behavior
 \cite{davis,17,18,20,22}. Labeling theory holds that being labeled as a ``deviant" leads a person to engage in deviant behavior, and explains why people's behavior clashes with social norms  \cite{davis}. However, social sciences research has not yet developed a normative definition of cyber communication and the online subculture. A significant study of online behavior that will inform our model is the ``Palace Study'' in which Suler and Philips \cite{18} classify deviant behavior (strategies) into several types and provide a taxonomy of possible counter-strategies, ranging from mild, premeditative actions (e.g., warnings) to preventative systems. The study confirms that existing prevention and remediation of aversive online behavior techniques have been found to be difficult and expensive. For instance, reputation systems were found to be useful for this purpose, but are unreliable due to the lack of identity validation and control \cite{six}. Recent socio-computational studies focus mostly on single short text analysis to automatically identify spam/deviant comments in user-contributed sites \cite{Wang:2010,44,40,30,37}. While content-based methods have shown encouraging results, they are limited to single-post analysis, and  they do not target specific users' behaviors or follow traces. 

Several tools exist to help moderators identify bots and vandalism (e.g. \cite{ot2,onlinetool}). 
Automated bots (e.g., Cluebot), filters (e.g., abusefilter), and 
editing assistants (e.g., Huggle and Twinkle) all aim to 
locate acts of vandalism. Such tools work via 
regular expressions and manually-authored rule sets. 
In addition, a notable effort is from  West  and colleagues \cite{West:2010}, who  adopted classifiers to detect   vandalism on Wikipedia. 
At the core of the West's solution is a lightweight classifier capable of identifying vandalism. The classifier exploits temporal and spatial features,  extracted from 
revision metadata of articles.

Our work also parallels the body of work on free-riding in peer-to-peer systems \cite{feldman2006}. Peer-to-peer systems are designed to allow users to connect with others and share resources. Similar to
online communities, users are free to access and contribute as much as desired, and few controls are in place.  As for online communities, punishments, although applied, are shown not to be truly effective, most likely because users can abandon the system.  To tackle these issues,  the common solution is to  implement incentive-based mechanisms.  Incentives are applied  in certain online forums, whereby end users are given special roles and privileges as a result of their good-standing (see \cite{JCC4:JCC401} for a discussion on the community enforcement mechanisms of E-Bay). In this paper, we assume that users cannot easily abandon the system (i.e., there are few competitors and a barrier to change, as there is with Facebook) and study the case when moderators focus on punishments rather than incentives. (See Section \ref{sec:Model}.) We discuss how to vary the model to study the incentives based case in Section \ref{sec:Conclusion}.

\section{Relevant Previous Work in Evolutionary Games}\label{sec:BigModel}
Consider a two-player bimatrix game. That is, the payoff matrix for the row player is $\mathbf{A} \in \mathbb{R}^{n\times n}$ ($n \in \mathbb{Z}_+$) and for the column player it is $\mathbf{B}$. In an evolutionary game, let $\boldsymbol{\zeta}(t) \in \mathbb{R}^{n \times 1}$ be a vector whose $i^\text{th}$ component $\boldsymbol{\zeta}_i(t)$ yields the proportion of the population of \textit{row players} that chooses pure strategy $i$ at time $t$. We will likewise define $\boldsymbol{\chi}(t) \in \mathbb{R}^{n \times 1}$ for the column players. Hofbauer \cite{Hof96} proposes the following replicator dynamics for this case:
\begin{gather}
\dot{\boldsymbol{\zeta}}_i = \boldsymbol{\zeta}_i\left( \left(\mathbf{A}\boldsymbol{\chi}\right)_i - \boldsymbol{\zeta}^T\mathbf{A}\boldsymbol{\chi} \right) \quad i = 1,\dots, n
\label{eqn:Hof1}
\\
\dot{\boldsymbol{\chi}}_i = \boldsymbol{\chi}_j \left(\left(\boldsymbol{\zeta}^T\mathbf{B}\right)_j - \boldsymbol{\zeta}^T\mathbf{B}\boldsymbol{\chi}\right) \quad j = 1,\dots,n
\label{eqn:Hof2}
\end{gather}
This is a simple generalization of the replicator dynamics from a zero-sum  game to a general sum game. 

For games in which a population is to play a symmetric bimatrix game, we propose the following simplified dynamics. Let $\boldsymbol{\zeta}(t) \in\mathbb{R}^{n\times n}$ simply be the vector whose $i^\text{th}$ component is the proportion of the population that is playing pure strategy $i$ at time $t$. Then for an individual playing strategy $i$, the expected payoff  value is nothing more than $(\mathbf{A}\boldsymbol{\zeta})_i$, regardless of whether this individual is a row or column player since $\boldsymbol{\zeta}^T\mathbf{A^T} = \mathbf{A}\boldsymbol{\zeta}$ by symmetry. Care must be taken, however, when computing the population average. In this case, the population average is not $\boldsymbol{\zeta}^T\mathbf{A}\boldsymbol{\zeta}$ as it is in the case of the classical replicator dynamics \cite{Wei95}. Instead, the population average is given by:
\begin{equation}
\bar{u} = \frac{1}{2}\boldsymbol{\zeta}^T\left(\mathbf{A} + \mathbf{A}^T\right)\boldsymbol{\zeta}
\label{eqn:PopulationAverage}
\end{equation}
To see this, assume that (as expected) half the time a player meets a competitor she will play the role of the row player and the other half of the time she will play the roll of the column player. Then the population average can be computed as:
\begin{equation}
\bar{u} = \frac{\boldsymbol{\zeta}^T\mathbf{A}\boldsymbol{\zeta} + \boldsymbol{\zeta}^T\mathbf{A}^T\boldsymbol{\zeta}}{2}
\end{equation}
which is identical to Equation \ref{eqn:PopulationAverage}. This leads to a simplified replicator dynamic in the case of a symmetric game:
\begin{equation}
\dot{\boldsymbol{\zeta}}_i = \boldsymbol{\zeta}_i\left( \left(\mathbf{A}\boldsymbol{\zeta} \right)_i - \frac{1}{2}\boldsymbol{\zeta}^T\left(\mathbf{A} + \mathbf{A}^T\right)\boldsymbol{\zeta} \right)
\label{eqn:SimpleRepDyn}
\end{equation} 
We will use this dynamic for the evolution of strategy \textit{within} our subpopulations of ordinary users and moderators while we will use the formulation of Hofbauer in our inter-population strategy evolution dynamics.

\section{Model} \label{sec:Model}
Let $x(t) \in [0,1]$ for all $t \in \mathbb{R}_+$ be the proportion of ordinary users who choose to cooperate (e.g., behave appropriately, act honestly, etc.) while $y(t) \in [0,1]$ for all $t \in \mathbb{R}_+$ is the proportion of users who choose to defect (e.g., behave negatively, deceive, etc.). Naturally our dynamics will require $x(t) + y(t) = 1$ for all $t \in \mathbb{R}_+$. Likewise, let $z(t)$ be the proportion of moderators who choose to be positive and let $w(t)$ be the proportion of moderators who choose to be negative so that $z(t) + w(t) = 1$ for all $t \in \mathbb{R}_+$ as well. 

When interacting, each subpopulation plays a symmetric general sum game: prisoner's dilemma (\cite{Gri12}, Page 67) and a coordination game respectively. In Prisoner's Dilemma, users who cooperate gain a benefit, but not so much as a user who defects from a cooperating user. Two defecting users reap a smaller benefit than they would if they cooperate.

For ordinary users, assume they have the following prisoner's dilemma payoff matrix:
\begin{equation}
\mathbf{A} = \begin{bmatrix} \frac{r}{2} & -r\\r & \frac{r}{4}
\end{bmatrix}
\end{equation}

This assumption follows naturally from the following obersations. Consider an interaction between two OSN users. If both choose to cooperate, we would expect their interaction to be beneficial to the community as a whole (e.g., produce an insightful, genuine conversation). However, if one user chooses to defect while the other opts to abide by community standards, the defecting player will derive his satisfaction at the expense of the unsuspecting, cooperating user (e.g., the troll reaping the reward of anger from the legitimate user). If both users defect, each will derive some satisfaction from the experience (e.g., a shared joke between two like-minded defectors), though not as much as if they had cooperated. Thus, the dominant strategy is, as in the case of the textbook prisoner's dilemma problem, to defect, despite the social optimum strategy being user cooperation.

In a coordination game, users who play the same strategy are rewarded, while users who do not are penalized. The moderators have the following coordination game payoff matrix:
\begin{equation}
\mathbf{F} = \begin{bmatrix} v & -v\\-v & v
\end{bmatrix}
\end{equation}
For simplicity in this paper, we will assume that $r = v = 1$ and leave results on the more general case to subsequent work. When the two populations interact, they play a bimatrix with payoff matrices given as:
\begin{gather}
\mathbf{B} = \begin{bmatrix}\frac{a}{2} & 0 \\ -\frac{a}{2} & -a\end{bmatrix}\\
\mathbf{C} = \begin{bmatrix}\frac{s}{2} & 0 \\ \frac{s}{4} & s\end{bmatrix}
\end{gather}
Here $\mathbf{B}$ is the payoff matrix for ordinary users (as the row players) and $\mathbf{C}$ is the payoff matrix for moderators (as the column players). It is easy to see for $a,s>0$ that ordinary users benefit from meeting a positive moderator when they are cooperating and gain nothing when they meet a negative moderator. When an ordinary user is defecting he is penalized when he meets any member of the moderator subpopulation, but more so when he meets a negative moderator. Likewise, moderators acting positively benefit when they meet any player, but less so when they meet a defector (presumably it makes them unhappy to consider a user engaged in negative behavior, but they are able to moderate behavior, thus ``improving society''). Moderators who are negative derive no pleasure from meeting a cooperating user, but substantial pleasure from punishing (or expelling) a defecting user. For the sake of simplicity, we will assume that $s = 1$ for the remainder of this paper. While these are specific payoff matrices, we assert that the qualitative behavior we observe will be largely the same no matter how we assign values, even in the presence of a more complex game structure. Essentially, as long as the users are playing prisoner's dilemma, the moderators are playing a coordination game, and there is a penalty when a defector meets a moderator, then the qualitative behaviors we observe will be present.

Assume that in a population of players a proportion $n_p \in (0,1)$ are ordinary users and $n_c  \in (0,1)$ are moderators where $n_p + n_c = 1$. Thus, $n_p$ is the proportion of the population that is an ordinary user. Let:
\begin{equation}
\boldsymbol{\xi}(t) = \begin{bmatrix}x(t)\\y(t)\end{bmatrix} \quad 
\boldsymbol{\eta}(t) = \begin{bmatrix}z(t)\\w(t)\end{bmatrix}
\end{equation}
Combining the dynamics given in Equation \ref{eqn:SimpleRepDyn} with Hofbauer's dynamics (Equations \ref{eqn:Hof1} and \ref{eqn:Hof2}), we obtain the following dynamics for the player:
\begin{multline}
\dot{\boldsymbol{\xi}}_i = 
\boldsymbol{\xi}_i \left(n_p\left( \left(\mathbf{A}\boldsymbol{\xi} \right)_i - \frac{1}{2}\boldsymbol{\xi}^T\left(\mathbf{A} + \mathbf{A}^T\right)\boldsymbol{\xi} \right) + \right.\\
\left. n_c  \left( \left(\mathbf{B}\boldsymbol{\eta}\right)_i - \boldsymbol{\xi}^T\mathbf{B}\boldsymbol{\eta} \right)\right) \quad i = 1,2
\label{eqn:DynSys1}
\end{multline}

\begin{multline}
\dot{\boldsymbol{\eta}}_j = 
\boldsymbol{\eta}_j \left(n_c\left( \left(\mathbf{F}\boldsymbol{\eta} \right)_j - \frac{1}{2}\boldsymbol{\eta}^T\left(\mathbf{F} + \mathbf{F}^T\right)\boldsymbol{\eta} \right) + \right.\\
\left. n_p  \left( \left(\boldsymbol{\xi}^T\mathbf{C}\right)_j - \boldsymbol{\xi}^T\mathbf{C}\boldsymbol{\eta} \right)\right) \quad j = 1,2
\label{eqn:DynSys2}
\end{multline}

In the sequel, we will explore the dynamics of stability for varying values of $n_p$ and $a$, the relative punitive value of defecting when playing against a member of the moderators subpopulation. 

We first state a theorem that will simplify our analysis of these dynamics. Essentially, it simply asserts we can solve these differential equations completely on the subspace $x(t) + y(t) = 1$ and $z(t) + w(t) = 1$.

\begin{theorem} On the subspace defined by the equalities $x(t) + y(t) = 1$ and $z(t) + w(t) = 1$, the dynamical system given in Equations \ref{eqn:DynSys1} and \ref{eqn:DynSys2} is equivalent to the two-variable differential system:
\begin{gather}
\dot{x} = \frac{1}{4}\,x \left( -1+x \right)  \left( -3{\it n_p}\,x+5\,{\it n_p}+4\,{\it n_p}\,a-4\,a \right)\label{eqn:DynSysA1}\\
\dot{z} = -\frac{1}{4}z \left( -1+z \right)  \left( -16{\it n_p}\,z+16\,z+5\,{\it n_p}
-8+5{\it n_p}\,x \right) 
\label{eqn:DynSysA2}
\end{gather}
\end{theorem}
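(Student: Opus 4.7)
The plan is verification by direct computation in two stages: first establish invariance of the constraint subspace, then carry out the substitution. For invariance, I would sum Equation \ref{eqn:DynSys1} over $i=1,2$. On the subspace $\xi_1+\xi_2=1$, the inter-population contribution telescopes to $n_c(\boldsymbol{\xi}^T\mathbf{B}\boldsymbol{\eta} - \boldsymbol{\xi}^T\mathbf{B}\boldsymbol{\eta})=0$, while the intra-population contribution equals $n_p\bigl(\boldsymbol{\xi}^T\mathbf{A}\boldsymbol{\xi} - \tfrac{1}{2}\boldsymbol{\xi}^T(\mathbf{A}+\mathbf{A}^T)\boldsymbol{\xi}\bigr)$, which also vanishes because $\boldsymbol{\xi}^T\mathbf{A}\boldsymbol{\xi}$ is a scalar equal to its own transpose $\boldsymbol{\xi}^T\mathbf{A}^T\boldsymbol{\xi}$. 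The identical argument applied to Equation \ref{eqn:DynSys2} gives $\dot{\eta}_1+\dot{\eta}_2=0$. Hence $y=1-x$ and $w=1-z$ persist under the flow, and it suffices to write down only the $i=1$ and $j=1$ equations.

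Next I would invoke the standard two-strategy replicator identity: for any payoff pair $(h_1,h_2)$ with average $\bar h = xh_1+(1-x)h_2$, one has $h_1-\bar h = (1-x)(h_1-h_2)$. Using in addition the equality $\tfrac{1}{2}\boldsymbol{\xi}^T(\mathbf{A}+\mathbf{A}^T)\boldsymbol{\xi} = \boldsymbol{\xi}^T\mathbf{A}\boldsymbol{\xi}$ noted above, this identity collapses Equation \ref{eqn:DynSys1} for $i=1$ into
\begin{equation*}
\dot{x} = x(1-x)\bigl\{n_p\bigl[(\mathbf{A}\boldsymbol{\xi})_1-(\mathbf{A}\boldsymbol{\xi})_2\bigr] + n_c\bigl[(\mathbf{B}\boldsymbol{\eta})_1-(\mathbf{B}\boldsymbol{\eta})_2\bigr]\bigr\},
\end{equation*}
and likewise for $\dot{z}$ with $(\mathbf{F},\mathbf{C})$ replacing $(\mathbf{A},\mathbf{B})$ (and $\boldsymbol{\xi}^T\mathbf{C}$ in place of $\mathbf{B}\boldsymbol{\eta}$). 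This is the conceptual step: it produces the $x(1-x)$ and $z(1-z)$ prefactors that appear in Equations \ref{eqn:DynSysA1}--\ref{eqn:DynSysA2}.

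What remains is purely mechanical. Using the explicit payoff matrices with $r=v=s=1$ together with $y=1-x$ and $w=1-z$, a short calculation yields $(\mathbf{A}\boldsymbol{\xi})_1-(\mathbf{A}\boldsymbol{\xi})_2 = (3x-5)/4$, $(\mathbf{B}\boldsymbol{\eta})_1-(\mathbf{B}\boldsymbol{\eta})_2 = a(z+w) = a$, $(\mathbf{F}\boldsymbol{\eta})_1-(\mathbf{F}\boldsymbol{\eta})_2 = 4z-2$, and $(\boldsymbol{\xi}^T\mathbf{C})_1-(\boldsymbol{\xi}^T\mathbf{C})_2 = (5x-3)/4$. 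Substituting these into the displayed forms, eliminating $n_c = 1-n_p$, and regrouping into the advertised polynomial factorizations reproduces Equations \ref{eqn:DynSysA1}--\ref{eqn:DynSysA2} exactly. The only real ``hard part'' is bookkeeping during this final simplification; there is no conceptual subtlety. It is worth noting that $\dot{x}$ ends up independent of $z$ precisely because the $z,w$ entries of the $\mathbf{B}$-cross term combine via $z+w=1$ into the constant $a$, which is what makes the closed-form analysis of the $\omega$-limit set in the sequel tractable.
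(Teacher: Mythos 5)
Your proposal is correct: the invariance of the simplex, the collapse to $\dot{x}=x(1-x)\bigl[n_p((\mathbf{A}\boldsymbol{\xi})_1-(\mathbf{A}\boldsymbol{\xi})_2)+n_c((\mathbf{B}\boldsymbol{\eta})_1-(\mathbf{B}\boldsymbol{\eta})_2)\bigr]$ (and its analogue for $z$), and the four payoff differences $(3x-5)/4$, $a$, $4z-2$, $(5x-3)/4$ all check out and reproduce Equations \ref{eqn:DynSysA1}--\ref{eqn:DynSysA2} after substituting $n_c=1-n_p$. The paper states this theorem without an explicit proof, implicitly relying on exactly this kind of direct substitution, so your argument matches the intended route while usefully making the simplex-invariance step explicit.
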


\section{Theoretical Results}\label{sec:EvolutionaryGame}
The following theorem on the equilibria of the dynamical system given by Equations \ref{eqn:DynSys1} and \ref{eqn:DynSys2} is easily verified by substitution. We note there are nine equilibria that can be identified by finding the roots of the right hand sides of Equations \ref{eqn:DynSysA1} and \ref{eqn:DynSysA2}.

\begin{theorem} For the dynamical system given by Equations \ref{eqn:DynSysA1} and \ref{eqn:DynSysA2}, there are always  9 equilibria, (some possibly spurious): 
\begin{enumerate*}
\item $x = 0, z = 0$
\item $x = 0,z = 1$
\item $x = 1, z = 0$
\item $x = 1, z = 1$
\item $ x=0, z=\tfrac{1}{16}{\tfrac {5{\it n_p}-8}{-1+{\it n_p}}}$ 
\item $x=\tfrac{1}{3}{\tfrac {5{\it n_p}+4{\it n_p}\,a-4\,a}{{\it n_p}}},z=0$
\item $x=1, z=\tfrac{1}{8}{\tfrac {5{\it n_p}-4}{-1+{\it n_p}}}$
\item $x=\tfrac{1}{3}{\tfrac {5{\it n_p}+4\,{\it n_p}\,a-4\,a}{{\it n_p}}},z=
1$
\item $x=\tfrac{1}{3}{\tfrac {5{\it n_p}+4\,{\it n_p}\,a-4\,a}{{\it n_p}}}, z=
\tfrac{1}{12}{\tfrac {10\,{\it n_p}-6+5\,{\it n_p}\,a-5\,a}{-1+{\it n_p}}}$
\end{enumerate*}
\label{thm:Equil}
\end{theorem}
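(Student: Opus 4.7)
The plan is to exploit the fact that the right-hand sides of Equations \ref{eqn:DynSysA1} and \ref{eqn:DynSysA2} are already presented in factored form, so finding equilibria reduces to enumerating the joint zeros of two polynomials in the variables $x$ and $z$. No dynamical systems machinery is needed; I would simply identify the zero sets of each equation separately and then intersect them.

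First, setting $\dot{x}=0$ in Equation \ref{eqn:DynSysA1} gives three admissible $x$-values, one for each of the three linear factors: $x=0$, $x=1$, and the root of $-3n_p x + 5n_p + 4n_p a - 4a=0$, namely
$$x^{\ast} = \frac{5n_p + 4n_p a - 4a}{3n_p}.$$
This $x^{\ast}$ is the common $x$-coordinate of the equilibria listed as (6), (8), and (9).

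Next, setting $\dot{z}=0$ in Equation \ref{eqn:DynSysA2} again yields a product of three linear factors, but the third factor now depends on $x$. Solving that factor for $z$ gives
$$z^{\ast}(x) = \frac{8 - 5n_p - 5n_p x}{16(1-n_p)}.$$
Substituting in turn $x=0$, $x=1$, and $x = x^{\ast}$ produces the closed-form $z$-coordinates for equilibria (5), (7), and (9) respectively. For case (9) a short algebraic simplification of $z^{\ast}(x^{\ast})$ is needed to match the expression stated in the theorem (the numerator collects to $24-40 n_p -20 n_p a + 20a$, which after dividing by $4$ and flipping the sign of the denominator recovers $\tfrac{1}{12}(10n_p-6+5n_p a - 5a)/(-1+n_p)$), but this is routine.

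Finally, I would take the Cartesian product of the three permissible $x$-values with the three permissible $z$-values (noting that the interior $z$ depends on the chosen $x$), yielding exactly nine pairs which match the listed equilibria verbatim. The caveat ``some possibly spurious'' in the statement covers parameter regimes where one of the interior coordinates leaves $[0,1]$ or coincides with a boundary value, but admissibility is logically distinct from the count of polynomial roots. The only genuine obstacle here is algebraic bookkeeping in case (9); everything else follows by reading off roots of polynomials that are already factored in the statement of the preceding theorem.
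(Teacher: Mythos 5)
Your proposal is correct and matches the paper's own (implicit) argument: the paper simply notes the nine equilibria are found as the roots of the factored right-hand sides of Equations \ref{eqn:DynSysA1} and \ref{eqn:DynSysA2} and verified by substitution, which is exactly your enumeration of the three zeros in $x$ crossed with the three zeros in $z$ (the third $z$-root depending on $x$). Your simplification of the interior point and the remark that spuriousness concerns admissibility in $[0,1]\times[0,1]$ rather than the count of roots are both consistent with the paper.
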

It is worthwhile noting that these equilibria may not always be valid for our equations. It may be that the equilibrium points fall outside the solution space $x \in [0,1]$ and $z \in [0,1]$. In this case, these stationary points are spurious. Of interest is the ninth equilibrium point, because it is an interior equilibrium point:
\begin{corollary} Assume $n_p \in (0,1)$, then there is a non-trivial, non-spurious equilibrium point for which $x,z \in (0,1)$ (and thus $y,w \in (0,1)$) just in case:
\begin{displaymath}
\begin{cases}
\frac{1}{2}\,{\frac {{\it n_p}}{1-{\it np}}} < a < \frac{5}{4}\,{\frac {{\it n_p}}{1-{\it n_p}}} & \text{if}\,\, n_p < \frac{8}{11}\\

\frac{1}{2}\,{\frac {{\it n_p}}{1-{\it np}}} < a < \frac{2}{5}\,{\frac {3-{\it n_p}}{1-{\it np}}} & \text{if}\,\, \frac{8}{11} \leq n_p < \frac{4}{5}\\

\frac{2}{5}\,{\frac {5\,{\it n_p}-3}{1-{\it np}}} < a < \frac{2}{5}\,{\frac {3-{\it n_p}}{1-{\it np}}} & \frac{4}{5} \leq n_p
\end{cases}
\end{displaymath} 
\label{thm:Interior}
\end{corollary}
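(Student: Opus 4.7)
The plan is to translate the constraints $x \in (0,1)$ and $z \in (0,1)$ into four linear inequalities in the punitive parameter $a$ (with $n_p$ held fixed), and then intersect them, doing a case split that records which of the two lower bounds and which of the two upper bounds is active.

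First I would handle $x$. Since $n_p \in (0,1)$, the sign of $x = \frac{5n_p + 4n_p a - 4a}{3n_p}$ matches that of its numerator, and $x > 0$ reduces to $4a(1-n_p) < 5n_p$, i.e.\ $a < \tfrac{5}{4}\,\tfrac{n_p}{1-n_p}$. Similarly $x < 1$ reduces, after clearing $3n_p > 0$, to $2n_p + 4a(n_p-1) < 0$, i.e.\ $a > \tfrac{1}{2}\,\tfrac{n_p}{1-n_p}$. This already gives the first bracket in the statement, and in particular shows the existence requires $a > 0$ automatically.

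Next I would handle $z = \frac{10n_p - 6 + 5a(n_p-1)}{12(n_p-1)}$, being careful that the denominator $12(n_p-1)$ is \emph{negative} for $n_p \in (0,1)$. So $z > 0$ is equivalent to the numerator being negative, which rearranges to $a > \tfrac{2}{5}\,\tfrac{5n_p - 3}{1-n_p}$; note this bound is negative, and hence vacuous, whenever $n_p < 3/5$. Likewise $z < 1$ becomes, after multiplying through by the negative denominator and flipping the inequality, $a < \tfrac{2}{5}\,\tfrac{3-n_p}{1-n_p}$.

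The remaining work is purely bookkeeping: the feasibility condition is
\begin{equation*}
\max\!\left(\tfrac{1}{2}\tfrac{n_p}{1-n_p},\ \tfrac{2}{5}\tfrac{5n_p-3}{1-n_p}\right) < a < \min\!\left(\tfrac{5}{4}\tfrac{n_p}{1-n_p},\ \tfrac{2}{5}\tfrac{3-n_p}{1-n_p}\right).
\end{equation*}
Comparing the two lower bounds via $\tfrac{n_p}{2} \lessgtr \tfrac{2(5n_p-3)}{5}$ produces the threshold $n_p = 4/5$, and comparing the two upper bounds via $\tfrac{5 n_p}{4} \lessgtr \tfrac{2(3-n_p)}{5}$ produces the threshold $n_p = 8/11$. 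Reading off which bound dominates on each side of these thresholds yields exactly the three cases listed in the corollary; in particular, for $n_p < 8/11$ both active bounds are the $x$-bounds, for $8/11 \le n_p < 4/5$ the lower bound comes from $x$ but the upper from $z$, and for $n_p \ge 4/5$ both come from $z$.

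The only real obstacle is sign discipline: the denominator $n_p - 1 < 0$ in $z$ flips the direction of two of the four inequalities, and one must verify that the two computed threshold values satisfy $8/11 < 4/5$, so that the three cases are genuinely consecutive and exhaustive on $(0,1)$. Once that is checked, the corollary follows.
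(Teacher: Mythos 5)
Your proposal is correct: the four inequalities for $x\in(0,1)$ and $z\in(0,1)$ (with the sign flip from $n_p-1<0$) and the threshold comparisons at $n_p=8/11$ and $n_p=4/5$ reproduce exactly the three cases of the corollary. The paper states this result without an explicit proof—it is the direct consequence of requiring the ninth equilibrium of Theorem \ref{thm:Equil} to lie in $(0,1)\times(0,1)$—so your argument is precisely the intended one.
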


A more interesting question revolves around the stability of the various equilibrium points. It would be nice to know that the presence of the moderator subpopulation causes the cooperate strategy to become stable within the ordinary users, even though in general prisoner's dilemma it is not in any player's interest to cooperate. Moreover, we would also like a society in which the moderators play the positive strategy, since there's no point in living in a society where users behave because they are terrified of their system of justice. (Presumably, the online society would fall apart.) We can explore this problem by computing the eigenvalues of the Jacobian matrix of the dynamical system. That is, by studying the characteristics of the non-linear dynamical system described in Equations \ref{eqn:DynSys1} and \ref{eqn:DynSys2} by linearizing about a stable point of interest. In the following lemma, we linearize about the utopian equilibrium point $x=1,z=1$ and use the eigenvalues of the Jacobian to determine when this point is stable.
\begin{lemma} The Jacobian matrix $\mathbf{H}$ of the dynamical system described by Equations \ref{eqn:DynSysA1} and \ref{eqn:DynSysA2} about $x=1,z=1$ (and $y = 0$ and $w = 0$) is specified by:
\begin{equation}
\mathbf{H} = \left[ \begin {array}{cc} 1/2\,{\it n_p}-a+{\it n_p}\,a&0
\\ \noalign{\medskip}0&-2+3/2\,{\it n_p}\end {array} \right]
\end{equation}
with eigenvalues:
\begin{displaymath}
 \left[ \begin {array}{c} -2+3/2\,{\it n_p}\\ \noalign{\medskip}1/2\,{
\it n_p}-a+{\it n_p}\,a\end {array} \right]
\end{displaymath}
\end{lemma}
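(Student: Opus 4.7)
My plan is to compute the Jacobian entry-by-entry using straightforward partial differentiation, exploiting the convenient factored forms of the right-hand sides of Equations \ref{eqn:DynSysA1} and \ref{eqn:DynSysA2} so that almost every term drops out automatically at the chosen equilibrium.

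First I observe that the right-hand side of Equation \ref{eqn:DynSysA1} depends on $x$ alone (there is no $z$ appearing anywhere in it). Consequently the upper-right entry of the Jacobian, $\partial \dot{x}/\partial z$, is identically zero, not merely at $(1,1)$. Likewise, the right-hand side of Equation \ref{eqn:DynSysA2} carries an explicit factor $z(-1+z)$; differentiating with respect to $x$ leaves this factor intact, and evaluation at $z=1$ forces $\partial \dot{z}/\partial x = 0$ as well. Hence $\mathbf{H}$ is diagonal, which immediately means its eigenvalues are its diagonal entries.

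Next I would compute the two diagonal entries. For the upper-left entry I write $\dot{x} = \tfrac{1}{4} x(x-1)\, p(x)$ with $p(x) = -3n_p x + 5n_p + 4n_p a - 4a$. By the product rule,
\begin{equation*}
\frac{\partial \dot{x}}{\partial x} = \tfrac{1}{4}\bigl[(x-1)p(x) + x\, p(x) + x(x-1)p'(x)\bigr].
\end{equation*}
At $x=1$ the first and third terms vanish, leaving $\tfrac{1}{4}p(1) = \tfrac{1}{4}(2n_p + 4n_p a - 4a) = \tfrac{1}{2}n_p - a + n_p a$, which matches the stated $\mathbf{H}_{11}$. For the lower-right entry I write $\dot{z} = -\tfrac{1}{4} z(z-1)\, q(x,z)$ with $q(x,z) = -16 n_p z + 16 z + 5n_p - 8 + 5n_p x$ and repeat the same argument: differentiating in $z$ and then setting $z=1$ eliminates the two terms carrying $(z-1)$, leaving $-\tfrac{1}{4} q(1,1) = -\tfrac{1}{4}(-6n_p + 8) = -2 + \tfrac{3}{2}n_p$, which is the stated $\mathbf{H}_{22}$.

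There is essentially no obstacle here; the only thing to be careful about is bookkeeping the product rule so that the factors $(x-1)$ and $(z-1)$ are correctly identified as the ones that vanish at the equilibrium. Once $\mathbf{H}$ is shown to be diagonal, the eigenvalue statement is immediate from standard linear algebra, and no further argument is required.
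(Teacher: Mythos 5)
Your computation is correct and matches the paper, which simply asserts the Jacobian by direct linearization (no written proof is given beyond the symbolic computation). Your entry-by-entry evaluation, noting that the factors $(x-1)$ and $(z-1)$ kill the off-diagonal and extra product-rule terms so that $\mathbf{H}$ is diagonal with eigenvalues equal to its diagonal entries, is exactly the intended argument.
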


From the previous lemma and Theorem 3.2 of \cite{Verh06} the following theorem is immediate:
\begin{theorem} Assume $n_p \in (0,1)$. If:
\begin{displaymath}
\frac{n_p}{2(1-n_p)} < a 
\end{displaymath}
then $x=1,z=1$ (and $y = 0$ and $w = 0$) is a stable equilibrium point.
\label{thm:StableEq}
\end{theorem}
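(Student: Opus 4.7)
The plan is to apply a standard linearization result (Theorem 3.2 of \cite{Verh06}, which states that an equilibrium of a smooth nonlinear system is asymptotically stable whenever every eigenvalue of the Jacobian at that equilibrium has strictly negative real part) directly to the Jacobian $\mathbf{H}$ computed in the preceding lemma. Since $\mathbf{H}$ is already diagonal, its eigenvalues are the two diagonal entries, so the entire task reduces to determining when each of $\lambda_1 = -2 + \tfrac{3}{2}n_p$ and $\lambda_2 = \tfrac{1}{2}n_p - a + n_p a$ is strictly negative.

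First I would handle $\lambda_1$, which does not depend on $a$. For $n_p \in (0,1)$ we have $\tfrac{3}{2}n_p < \tfrac{3}{2} < 2$, so $\lambda_1 < 0$ automatically, regardless of the chosen punitive value $a$. This takes care of the $z$-direction of the linearized flow with no additional hypothesis required.

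Next I would address $\lambda_2$. Grouping terms in $a$, write
\begin{equation*}
\lambda_2 \;=\; \tfrac{1}{2}n_p - a(1 - n_p).
\end{equation*}
Since $n_p \in (0,1)$ implies $1 - n_p > 0$, the inequality $\lambda_2 < 0$ is equivalent to $a(1 - n_p) > \tfrac{1}{2}n_p$, i.e.\ to
\begin{equation*}
a \;>\; \frac{n_p}{2(1 - n_p)},
\end{equation*}
which is exactly the hypothesis of the theorem. Combining this with the unconditional negativity of $\lambda_1$, both eigenvalues of $\mathbf{H}$ have negative real part, and the linearization theorem yields that $(x,z) = (1,1)$ (equivalently $y = w = 0$) is a stable equilibrium of the reduced system \ref{eqn:DynSysA1}--\ref{eqn:DynSysA2}. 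Since Theorem 1 showed this reduced system is equivalent to \ref{eqn:DynSys1}--\ref{eqn:DynSys2} on the invariant subspace $x+y = z+w = 1$, stability transfers back to the original four-dimensional system.

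There is essentially no substantive obstacle: the Jacobian has already been diagonalized in the lemma, so the proof is a one-line sign analysis of each diagonal entry. The only mildly delicate point is making explicit that the constraint $n_p \in (0,1)$ renders $\lambda_1$ automatically negative, so that the single inequality on $a$ in the theorem statement is indeed sufficient (and not merely one of two conditions one might have expected a priori).
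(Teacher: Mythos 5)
Your proposal is correct and follows essentially the same route as the paper: the paper derives this theorem immediately from the diagonal Jacobian $\mathbf{H}$ of the preceding lemma together with the linearization theorem (Theorem 3.2 of \cite{Verh06}), which is exactly your sign analysis of the two eigenvalues $-2+\tfrac{3}{2}n_p$ and $\tfrac{1}{2}n_p - a(1-n_p)$.
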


The resulting relationship between the two variables is illustrated in Figure \ref{fig:npvsa}.
\begin{figure}[htbp]
\centering
\includegraphics[scale=0.3]{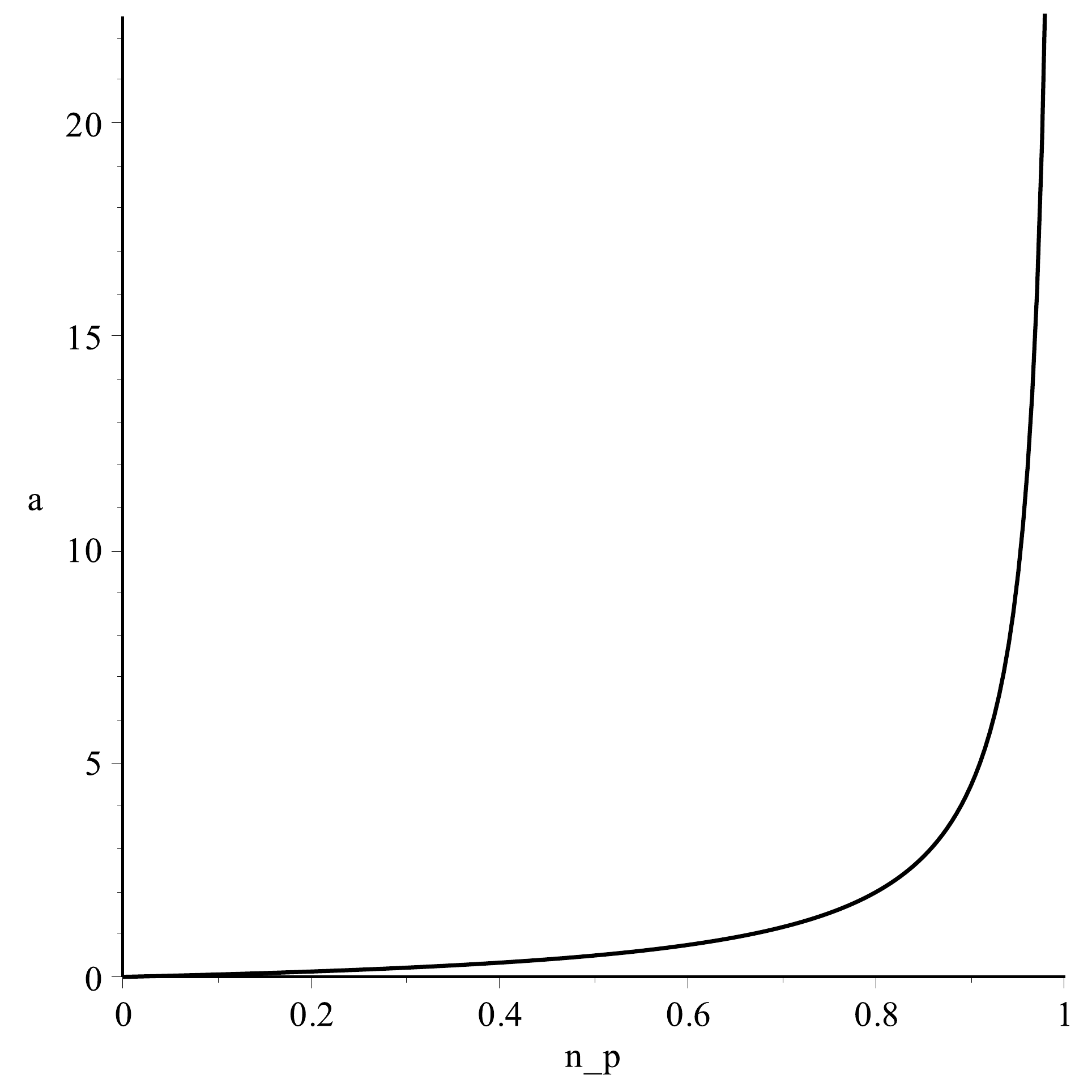}
\caption{The relationship between $n_p$ and $a$ when we ensure that $[x=1,y=0,z=1,w=0]$ is a stable equilibrium point.}
\label{fig:npvsa}
\end{figure}
This figure makes a great deal of sense. As the proportion of the population becomes overwhelmingly dominated by ordinary users, the probability of encountering a member of the moderators subpopulation drops. Therefore, to ensure proper behavior, stricter and stricter punitive action is required. 

By a similar process, we can also explore the case of the dystopian society in which $[x=0,y=1,z=0,w=1]$ is stable. 
\begin{theorem} Assume $n_p \in (0,1)$. If:
\begin{displaymath}
\frac{5n_p}{4(1-n_p)} > a 
\end{displaymath}
then $x=0,z=0$ (and $y = 1$ and $w = 1$) is a stable equilibrium point.
\label{thm:UnstableEq}
\end{theorem}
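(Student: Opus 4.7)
The plan is to mirror the argument used for Theorem \ref{thm:StableEq}: compute the Jacobian $\mathbf{H}$ of the reduced planar system (Equations \ref{eqn:DynSysA1} and \ref{eqn:DynSysA2}) evaluated at the dystopian corner $(x,z) = (0,0)$, extract its eigenvalues, and then apply Theorem 3.2 of \cite{Verh06} which requires that both eigenvalues have strictly negative real part for local asymptotic stability.

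First I would observe a convenient structural feature of the reduced system at this corner. The right-hand side of $\dot{x}$ in Equation \ref{eqn:DynSysA1} depends only on $x$ (not on $z$), so $\partial \dot{x}/\partial z \equiv 0$ everywhere, and in particular at $(0,0)$. For $\partial \dot{z}/\partial x$, Equation \ref{eqn:DynSysA2} carries an overall factor of $z(z-1)$, so this partial derivative vanishes when $z=0$. Consequently $\mathbf{H}|_{(0,0)}$ is diagonal, and its eigenvalues are just the two diagonal entries $\partial \dot{x}/\partial x$ and $\partial \dot{z}/\partial z$ evaluated at $(0,0)$.

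Next I would compute these two entries directly. Differentiating $\dot{x} = \tfrac{1}{4}x(x-1)(-3n_p x + 5n_p + 4n_p a - 4a)$ and setting $x=0$ kills every term except the one coming from differentiating the initial $x$, giving
\begin{equation*}
\left.\frac{\partial \dot{x}}{\partial x}\right|_{(0,0)} = -\tfrac{1}{4}\bigl(5n_p + 4n_p a - 4a\bigr) = a(1-n_p) - \tfrac{5}{4}n_p.
\end{equation*}
Similarly, differentiating $\dot{z} = -\tfrac{1}{4}z(z-1)(-16n_p z + 16 z + 5n_p - 8 + 5n_p x)$ and setting $z=0$ (which also makes the $x$-dependent part drop out) yields
\begin{equation*}
\left.\frac{\partial \dot{z}}{\partial z}\right|_{(0,0)} = \tfrac{1}{4}\bigl(5n_p - 8\bigr).
\end{equation*}

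Finally I would check the sign conditions. Since $n_p \in (0,1)$, the second eigenvalue $\tfrac{1}{4}(5n_p - 8)$ is automatically strictly negative, so stability is governed entirely by the first eigenvalue. Requiring $a(1-n_p) - \tfrac{5}{4}n_p < 0$ and rearranging (using $1 - n_p > 0$) gives exactly $a < \tfrac{5 n_p}{4(1-n_p)}$, which is the hypothesis of the theorem. Pulling this back through Theorem 3.2 of \cite{Verh06} establishes local asymptotic stability of $(x,z)=(0,0)$, equivalently the dystopian point $[x=0, y=1, z=0, w=1]$. I do not anticipate a serious obstacle: the main computational wrinkle is verifying that the off-diagonal entries of $\mathbf{H}$ vanish at the corner so that the eigenvalues can be read off without solving a characteristic polynomial, and this follows immediately from the factored form of the right-hand sides.
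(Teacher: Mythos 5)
Your proposal is correct and is essentially the paper's own argument: the paper proves this "by a similar process" to the utopian case, i.e., by evaluating the (diagonal) Jacobian of the reduced system at $(x,z)=(0,0)$, obtaining the eigenvalues $a(1-n_p)-\tfrac{5}{4}n_p$ and $\tfrac{1}{4}(5n_p-8)$, and invoking Theorem 3.2 of \cite{Verh06}. Your sign analysis (the second eigenvalue is automatically negative for $n_p\in(0,1)$, and the first is negative exactly when $a<\tfrac{5n_p}{4(1-n_p)}$) matches what the paper intends.
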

\begin{corollary} There is at least one pair of values for $a$ and $n_p$ so that both $[x=0,y=1,z=0,w=1]$ and $[x=1,y=0,z=1,w=0]$ are stable.
\label{cor:DualStability}
\end{corollary}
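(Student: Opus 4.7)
The plan is to obtain the corollary as an immediate consequence of Theorems \ref{thm:StableEq} and \ref{thm:UnstableEq}: simply intersect the two stability regions in the $(n_p, a)$ parameter plane and exhibit a single point in that intersection.

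More concretely, I would proceed as follows. Theorem \ref{thm:StableEq} asserts that the utopian equilibrium $[x=1,y=0,z=1,w=0]$ is stable precisely when
\begin{equation*}
a > \frac{n_p}{2(1-n_p)},
\end{equation*}
while Theorem \ref{thm:UnstableEq} asserts that the dystopian equilibrium $[x=0,y=1,z=0,w=1]$ is stable precisely when
\begin{equation*}
a < \frac{5n_p}{4(1-n_p)}.
\end{equation*}
Both equilibria are therefore simultaneously stable if and only if
\begin{equation*}
\frac{n_p}{2(1-n_p)} < a < \frac{5n_p}{4(1-n_p)}.
\end{equation*}
For any $n_p \in (0,1)$ the factor $\tfrac{n_p}{1-n_p}$ is strictly positive, and since $\tfrac{1}{2} < \tfrac{5}{4}$ the lower bound is strictly smaller than the upper bound. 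Consequently the admissible interval for $a$ is non-empty for every $n_p \in (0,1)$, proving existence of the required pair.

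To make the statement concrete I would conclude by exhibiting a witness, for example $n_p = \tfrac{1}{2}$, which gives the interval $\tfrac{1}{2} < a < \tfrac{5}{4}$, so that the choice $(n_p, a) = (\tfrac{1}{2}, 1)$ renders both equilibria locally asymptotically stable. There is no real obstacle here; the only subtle point worth flagging is that local stability of both equilibria is perfectly consistent for a nonlinear system on a bounded state space, since the two basins of attraction simply partition (part of) the square $[0,1]^2$. This multistability is precisely the phenomenon exploited in the next section when discussing basins of attraction.
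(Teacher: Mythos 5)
Your proposal is correct and follows exactly the route the paper intends: the corollary is an immediate consequence of Theorems \ref{thm:StableEq} and \ref{thm:UnstableEq}, since for any $n_p \in (0,1)$ the interval $\left(\tfrac{n_p}{2(1-n_p)}, \tfrac{5n_p}{4(1-n_p)}\right)$ is nonempty, and exhibiting a witness such as $(n_p,a) = (\tfrac{1}{2},1)$ settles existence. The only small caution is that the theorems state sufficient conditions (``if''), not characterizations (``precisely when''), but since only the sufficiency direction is used, the argument is unaffected.
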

Corollary \ref{cor:DualStability} tells us that in our online system, it is possible to ``descend into chaos'' in the sense that all users are actively defecting (deceiving, scamming etc.) and all moderators are engaged in highly punitive activities. Corollary \ref{cor:DualStability} also suggests that an investigation of the basins of attraction for the two attracting points could lead to a complete characterization of the behavior of the dynamical system in light of the following theorem, which follows from Theorem 3.1 of \cite{Verh06}:
\begin{theorem} Under no conditions is the interior equilibrium point 
\begin{gather*}
x=\tfrac{1}{3}{\tfrac {5{\it n_p}+4\,{\it n_p}\,a-4\,a}{{\it n_p}}}\\
z=\tfrac{1}{12}{\tfrac {10\,{\it n_p}-6+5\,{\it n_p}\,a-5\,a}{-1+{\it n_p}}}
\end{gather*}
ever stable.
\end{theorem}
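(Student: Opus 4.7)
The plan is to apply the linearization criterion (Theorem 3.1 of \cite{Verh06}) to the reduced two-dimensional system of Equations \ref{eqn:DynSysA1} and \ref{eqn:DynSysA2} and exhibit an eigenvalue with positive real part at the interior fixed point $(x^\ast,z^\ast)$. The key structural observation that makes this tractable is that the right-hand side of Equation \ref{eqn:DynSysA1} depends only on $x$ and not on $z$. Consequently the Jacobian at any equilibrium has the lower-triangular form
\begin{equation*}
\mathbf{J}(x^\ast,z^\ast) = \begin{bmatrix} \partial_x \dot{x} & 0 \\ \partial_x \dot{z} & \partial_z \dot{z} \end{bmatrix},
\end{equation*}
so its eigenvalues are simply the two diagonal entries. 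It therefore suffices to evaluate $\partial_x \dot{x}$ and $\partial_z \dot{z}$ at the interior equilibrium and show that at least one of them is positive.

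Next I would exploit the product structure of the two right-hand sides. Write $\dot{x}=\tfrac{1}{4}x(x-1)\,L(x)$ and $\dot{z}=-\tfrac{1}{4}z(z-1)\,M(x,z)$ where $L(x) = -3n_p x + 5n_p + 4n_p a - 4a$ and $M(x,z) = -16 n_p z + 16 z + 5 n_p - 8 + 5 n_p x$. By definition of the interior equilibrium (item 9 of Theorem \ref{thm:Equil}), both $L(x^\ast)=0$ and $M(x^\ast,z^\ast)=0$. Applying the product rule and killing the terms that contain these vanishing factors leaves
\begin{gather*}
\partial_x \dot{x}\bigl|_{(x^\ast,z^\ast)} = \tfrac{1}{4}\,x^\ast(x^\ast-1)\,L'(x^\ast) = \tfrac{3}{4}\,n_p\,x^\ast(1-x^\ast),\\
\partial_z \dot{z}\bigl|_{(x^\ast,z^\ast)} = -\tfrac{1}{4}\,z^\ast(z^\ast-1)\,\partial_z M = 4(1-n_p)\,z^\ast(1-z^\ast).
\end{gather*}

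Finally, since the hypothesis of the theorem implicitly (and Corollary \ref{thm:Interior} explicitly) presumes that the interior equilibrium lies in $(0,1)\times(0,1)$, we have $x^\ast(1-x^\ast)>0$ and $z^\ast(1-z^\ast)>0$; combined with $n_p\in(0,1)$, both diagonal entries above are strictly positive. Thus both eigenvalues of $\mathbf{J}(x^\ast,z^\ast)$ are real and positive, so by the standard linearization theorem the point is unstable (in fact an unstable node). The only obstacle I foresee is purely bookkeeping in the product-rule simplification; there is no subtle case analysis on $a$ or $n_p$ because the two vanishing-factor identities handle every dependence on these parameters uniformly.
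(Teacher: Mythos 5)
Your proof is correct. The computations check out: since the right-hand side of Equation \ref{eqn:DynSysA1} is independent of $z$, the Jacobian is indeed lower triangular, and because the linear factors $L$ and $M$ vanish at the ninth equilibrium, the product rule leaves exactly $\tfrac{3}{4}n_p x^*(1-x^*)$ and $4(1-n_p)z^*(1-z^*)$ on the diagonal, both strictly positive whenever the point is non-spurious and $n_p\in(0,1)$. Your route differs from the paper's in a worthwhile way: the paper also linearizes at the interior point, but it expresses the eigenvalue-sign requirements for stability as inequalities in $(n_p,a)$ (which are precisely the conditions $x^*\notin(0,1)$, $z^*\notin(0,1)$ in disguise) and then invokes Corollary \ref{thm:Interior} to show these are incompatible with non-spuriousness, the would-be overlap forcing $n_p=\tfrac{4}{3}\notin(0,1)$. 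You bypass that detour through parameter space entirely: interiority of $(x^*,z^*)$ immediately makes both diagonal entries, hence both eigenvalues, positive, with no case analysis on $a$ and $n_p$, and you get the slightly stronger conclusion that the interior equilibrium is a repelling node (a source), not merely unstable. Both arguments rest on the same linearization criterion cited from \cite{Verh06}; yours is shorter and makes the mechanism transparent (each eigenvalue is a positive multiple of $x^*(1-x^*)$ or $z^*(1-z^*)$), while the paper's parameter-space formulation ties the instability directly to the inequalities already catalogued in Corollary \ref{thm:Interior}.
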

\begin{proof} Analysis of Jacobian matrix and eigenvalues shows that for this point to be stable, we must have:
\begin{gather*}
a < \frac{1}{2}\,{\frac {{\it n_p}}{1-{\it n_p}}} \text{ or } \frac{5}{4}\,{\frac {{\it n_p}}{1-{\it n_p}}} < a \text{ and }\\
a<\frac{2}{5}\,{\frac {5\,{\it n_p}-3}{1-{\it n_p}}} \text{ or } \frac{2}{5}\,{\frac {3-{\it n_p}}{1-{\it n_p}}}<a
\end{gather*}
From Theorem \ref{thm:Interior}, for the equilibrium to be non-spurious (i.e., in $[0,1] \times [0,1]$) it must be the case that:
\begin{displaymath}
a < \frac{1}{2}\,{\frac {{\it n_p}}{1-{\it n_p}}} \text{ and }
\frac{2}{5}\,{\frac {3-{\it n_p}}{1-{\it n_p}}}<a
\end{displaymath}
If these intervals overlap, then there is a point at which:
\begin{displaymath}
\frac{n_p}{2} = \frac{2}{5}(3 - n_p)
\end{displaymath}
which only occurs if $n_p = \tfrac{4}{3}$, but we know $n_p \in (0,1)$. Thus, the interior equilibrium is always unstable.
\end{proof}
What the preceding theorem means is that for many online systems that obey the dynamics discussed will either converge to a final state in which all users are behaving cooperatively and moderators who are positive or it will descend into chaos. This is illustrated in the figures below and proved explicitly for certain $n_p$.

To illustrate the nature of the equilibria, we consider the case when $n_p = 0.9$ and $a = 7$. (In this case, 90\% of the population is composed of ordinary users.) From Theorem \ref{thm:Equil}, we can see that the non-trivial interior point equilibrium is present with values: $x = \tfrac{17}{27}$ and $z = \tfrac{5}{12}$ as are 6 other equilibrium points. The system phase portrait is shown in Figure \ref{fig:Cliff1}:
\begin{figure}[htbp]
\centering
\includegraphics[scale=0.38]{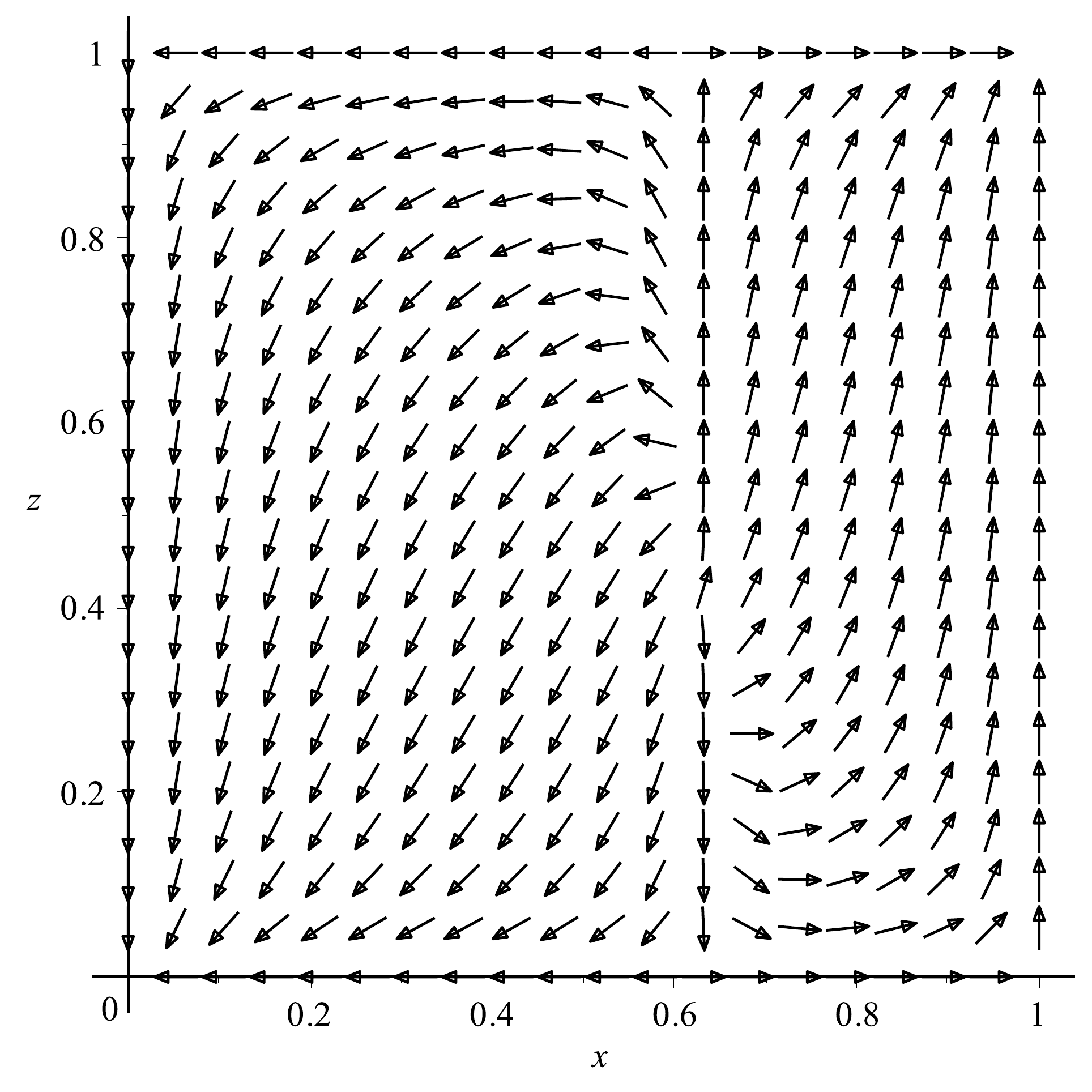}
\caption{Phase portrait of the dynamical system when $a = 7$ and $n_p=0.9$}
\label{fig:Cliff1}
\end{figure} 
In a case like this, we can see that the space half-space $x<\tfrac{17}{27}$, $z \in (0,1)$ is the basin of attraction for the point for $x=0,z=0$ (except for a set of measure zero) while the half-space $x>\tfrac{17}{27}$, $z \in (0,1)$ is the basin of attraction for $x=1,z=1$ (again except for a set of measure zero).

The behavior of the population varies substantially with the value of $a$ and the previously illustrated behavior is not the only possible outcome for this online society. An interesting situation arises when we set $n_p = 0.9$ and $a = 12$. In this case, there is no interior equilibrium point and the basin of attraction for $x=1,z=1$ is \textit{almost} the entire region $[0,1] \times [0,1]$. This is shown in Figure \ref{fig:Cliff2}.
\begin{figure}[htbp]
\centering
\includegraphics[scale=0.38]{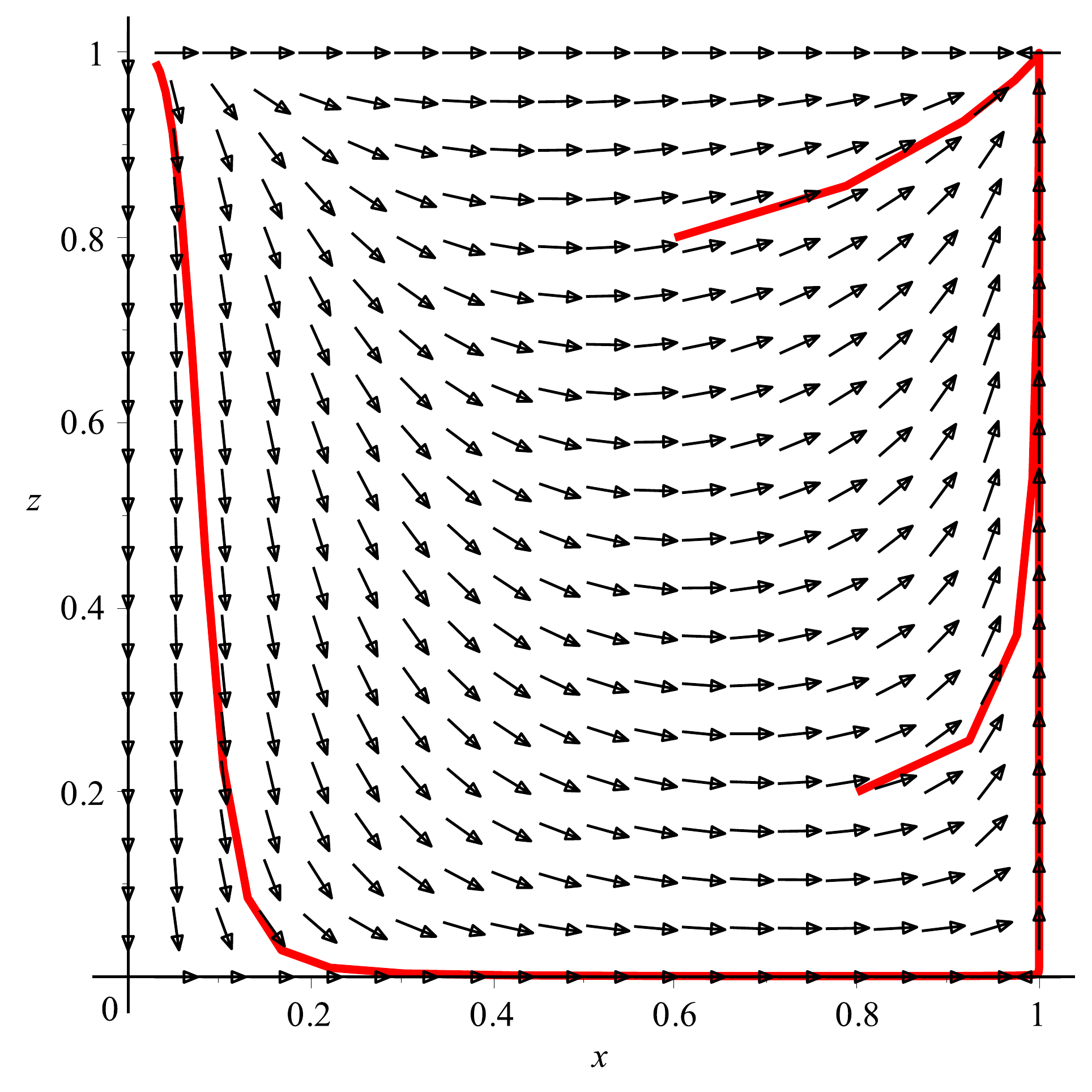}
\caption{Phase portrait of the dynamical system when $a = 12$ and $n_p=0.9$}
\label{fig:Cliff2}
\end{figure} 
What's interesting about this case is the self-regulating nature of the system. Note the trajectory beginning at $x = \tfrac{3}{100}$ and $z = \tfrac{99}{100}$. We see that the moderators, while starting with the positive strategy, quickly change to the negative strategy, which drives the ordinary users to move from the defect strategy to the cooperate strategy. This, in turn, drives the moderators to move from the negative strategy to the positive strategy, arriving in the utopian scenario. We now provide a result on the basin of attraction in scenarios like the previous example. This result shows that the behavior illustrated in Figure \ref{fig:Cliff1} is somewhat typical of this system and describes how one online site can become successful and (mostly) stable like Reddit, while other sites might descend into chaos and fail.

Suppose that there is a non-trivial equilibrium solution for which $x^* \in (0,1)$, that is:
\begin{equation}
x^*=\tfrac{1}{3}{\tfrac {5{\it n_p}+4\,{\it n_p}\,a-4\,a}{{\it n_p}}}
\label{eqn:xstar}
\end{equation} 
Consider $x^* - \epsilon$ where $\epsilon \in (0,x^*)$. If we evaluate 
\begin{displaymath}
\dot{x} = \frac{1}{4}\,x \left( -1+x \right)  \left( -3{\it n_p}\,x+5\,{\it n_p}+4\,{\it n_p}\,a-4\,a \right)
\end{displaymath}
at $x^*-\epsilon$, then we obtain:
\begin{equation}
\frac{1}{12}\frac{r\epsilon s}{n_p}
\label{eqn:EvalXdot}
\end{equation}
where
\begin{gather*}
r = \left( 5\,{\it n_p}-4\,a+4\,{\it n_p}\,a-3\,\epsilon\,{
\it n_p} \right)\\
s = \left( 4\,{\it n_p}\,a-4\,a+2\,{\it n_p}-3\,
\epsilon\,{\it n_p} \right)
\end{gather*}
Expression \ref{eqn:EvalXdot} is cubic in $\epsilon$ and it has three roots: 
\begin{displaymath}
\left\{0,\frac{2}{3}\,{\frac {2\,{\it n_p}\,a-2\,a+{\it n_p}}{{\it n_p}}},
\frac{1}{3}\,{\frac {
5\,{\it n_p}-4\,a+4\,{\it n_p}\,a}{{\it n_p}}}\right\}
\end{displaymath}
We can see at once that:
\begin{equation}
\frac{1}{3}\,{\frac {
5\,{\it n_p}-4\,a+4\,{\it n_p}\,a}{{\it n_p}}} - \frac{2}{3}\,{\frac {2\,{\it n_p}\,a-2\,a+{\it n_p}}{{\it n_p}}} = 1
\end{equation}
and, by our assumption in Equation \ref{eqn:xstar}, the Expression \ref{eqn:EvalXdot} is either always positive or always negative on the interval:
\begin{displaymath}
\left[0,
\frac{1}{3}\,{\frac {
5\,{\it n_p}-4\,a+4\,{\it n_p}\,a}{{\it n_p}}}\right]
\end{displaymath}
since the right endpoint of this interval is positive by assumption. To determine the sign of the function, we can can compute the critical points of the derivative of the cubic equation as:
\begin{equation}
\frac{\left( {\frac {7}{9}}\,{\it n_p}+{\frac {8}{9}}\,{\it n_p}\,a-{\frac {8
}{9}}\,a\pm1/9\,\sqrt {v} \right)}{{{\it n_p}}}
\end{equation}
where:
\begin{multline*}
v = 19\,{{\it n_p}}^{2}+28\,a{{\it n_p}}^{2}-28\,{\it n_p}\,a+16\,{{\it n_p}}^{2}{a}^{2}-\\32\,{\it n_p}\,{a}^{2}+16\,{a}^{2}
\end{multline*}
The positive root is clear and the second derivative of the cubic equation in $\epsilon$ evaluated at this root is:
\begin{equation}
6\,{\it np}\,\sqrt {v} > 0
\end{equation}
meaning that the positive root corresponds to a minimum and thus, for all appropriately chosen values of $n_p$ and $a$, we know that $\dot{x} < 0$ when $\epsilon > 0$ and $x$ must decrease toward $0$. By a similar argument, we can show that if $\epsilon < 0$, then $\dot{x} > 0$ and $x$ must increase toward $1$. Thus we have proved:
\begin{lemma} Assume a non-spurious, non-trivial interior equilibrium exists in the game; i.e., the ninth equilibrium point from Theorem \ref{thm:Equil} is contained in $(0,1) \times (0,1)$. If:
\begin{equation}
x(0) < \tfrac{1}{3}{\tfrac {5{\it n_p}+4\,{\it n_p}\,a-4\,a}{{\it n_p}}}
\end{equation}
then $\lim_{t \rightarrow \infty} x(t) = 0$. Otherwise, if $x(0)$ is greater than this value, $\lim_{t \rightarrow \infty} x(t) = 1$.
\end{lemma}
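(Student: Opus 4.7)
The plan is to exploit the key structural observation that the right-hand side of equation \ref{eqn:DynSysA1} depends only on $x$, not on $z$. This means the $x$-dynamics form an autonomous one-dimensional ODE, and the lemma reduces to an elementary phase-line analysis on $[0,1]$ --- the complicated coupled $(x,z)$ flow is irrelevant to the conclusion about $\lim_{t\to\infty} x(t)$.

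First I would rewrite the right-hand side as a cubic in $x$ with three simple real roots: $x=0$, $x=1$, and $x^\ast = \tfrac{1}{3}\tfrac{5n_p+4n_p a-4a}{n_p}$. The hypothesis that the ninth equilibrium of Theorem \ref{thm:Equil} lies in $(0,1)\times(0,1)$ guarantees $x^\ast \in (0,1)$, so the three roots are distinct and cleanly partition $[0,1]$ into the two invariant subintervals $(0,x^\ast)$ and $(x^\ast,1)$.

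Next I would pin down the sign of $\dot x$ on each subinterval. The leading coefficient of the cubic in $x$ is $-\tfrac{3}{4}n_p < 0$, so the signs of $\dot x$ alternate across the three roots. Evaluating at any convenient sample point (for instance, $x = x^\ast/2$, where the factors $x$, $x-1$, and $-3n_p x + 5n_p + 4n_p a - 4a$ have signs $+$, $-$, $+$) yields $\dot x < 0$ on $(0,x^\ast)$ and, by the alternation, $\dot x > 0$ on $(x^\ast,1)$. This is essentially the content of the calculation preceding the lemma statement, organized without the auxiliary $\epsilon$-substitution. A cleaner alternative is to invoke the instability of the interior equilibrium via its Jacobian (already used in the previous theorem) to conclude that $x^\ast$ is repelling for the $x$-flow.

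Finally, I would apply the standard monotone convergence theorem for autonomous scalar ODEs: on each of $(0,x^\ast)$ and $(x^\ast,1)$, the solution $x(t)$ is strictly monotone and bounded in $[0,1]$, hence converges, and the limit must be an equilibrium of the 1D system. The only available equilibria adjacent to each subinterval are the boundary points $0$ and $1$, giving $\lim_{t\to\infty} x(t)=0$ when $x(0) < x^\ast$ and $\lim_{t\to\infty} x(t)=1$ when $x(0) > x^\ast$. The only delicate point --- and it is very mild --- is confirming the sign pattern without splitting into the three subcases of Corollary \ref{thm:Interior}, but the sample-point or eigenvalue argument handles all cases uniformly, so no obstacle of substance remains.
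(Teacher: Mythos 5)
Your proof is correct and rests on the same core observation as the paper's: Equation \ref{eqn:DynSysA1} is an autonomous scalar equation, cubic in $x$ with simple roots $0$, $x^{\ast}=\tfrac{1}{3}\,\tfrac{5n_p+4n_p a-4a}{n_p}$, and $1$, so the lemma reduces to a phase-line argument that is independent of $z$. Where you differ is in execution: the paper establishes the sign of $\dot{x}$ by substituting $x=x^{\ast}-\epsilon$, finding the roots of the resulting cubic in $\epsilon$ (which are $0$, $x^{\ast}-1$, and $x^{\ast}$), and then locating the critical points and checking the second derivative to argue the expression keeps one sign for $\epsilon\in(0,x^{\ast})$, whereas you read the same sign pattern directly off the factorization $\dot{x}=-\tfrac{3n_p}{4}\,x(x-1)(x-x^{\ast})$, the negative leading coefficient, and a sample point --- cleaner, and it handles both sides of $x^{\ast}$ at once rather than by the paper's ``similar argument'' for $\epsilon<0$. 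You also make explicit the closing step the paper leaves implicit, namely that a bounded monotone solution of a scalar autonomous ODE converges to an equilibrium, which forces the limits $0$ and $1$ on the respective invariant intervals. Both arguments are sound; yours is tighter, avoids the auxiliary $\epsilon$-calculus, and, as you note, does not need the subcases of Corollary \ref{thm:Interior} beyond the hypothesis $x^{\ast}\in(0,1)$.
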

To complete the characterization of the limiting behavior of the differential equation, we analyze the eigenvalues of the Jacobian matrix at $x=0,z=1$ and $x=1,z=0$. Our last lemma follows from Theorem 3.1 of \cite{Verh06}:
\begin{lemma} The point $x = 0,z = 1$ is stable if and only if:
\begin{equation}
n_p < \frac{8}{11} \quad \text{and} \quad a<\frac{5}{4}\,{\frac {{\it n_p}}{1-{\it n_p}}}
\end{equation}
Furthermore, the point $x = 1, z=0$ is stable if and only if:
\begin{equation}
n_p < \frac{4}{5} \quad \text{and} \quad \frac{1}{2}\,{\frac {{\it n_p}}{-1+{\it n_p}}}<a
\end{equation}
\end{lemma}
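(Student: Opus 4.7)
The plan is a direct linearization-plus-eigenvalue computation at each of the two equilibria $(x,z)=(0,1)$ and $(x,z)=(1,0)$, followed by an application of Theorem 3.1 of \cite{Verh06}: a hyperbolic fixed point of a planar $C^1$ system is asymptotically stable if and only if both eigenvalues of its Jacobian have strictly negative real part. Because the system has already been reduced to the two-variable form of Equations \ref{eqn:DynSysA1}--\ref{eqn:DynSysA2}, only a $2\times 2$ Jacobian needs to be examined at each point, and no higher-dimensional stability machinery is required.

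The key structural observation that will make the computation tractable is that the Jacobian at each of these equilibria is \emph{diagonal}. First, $\dot{x}$ in Equation \ref{eqn:DynSysA1} depends only on $x$, so $\partial \dot{x}/\partial z \equiv 0$ identically. Second, $\dot{z}$ in Equation \ref{eqn:DynSysA2} carries an overall factor $z(z-1)$, so $\partial \dot{z}/\partial x$ vanishes at any equilibrium with $z\in\{0,1\}$. Consequently, at both $(0,1)$ and $(1,0)$ the off-diagonal entries of the Jacobian are zero, and the two eigenvalues are simply the diagonal entries $\partial \dot{x}/\partial x$ and $\partial \dot{z}/\partial z$ evaluated at the equilibrium. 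No characteristic polynomial need be solved.

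What remains is routine product-rule differentiation of the cubic polynomials in Equations \ref{eqn:DynSysA1} and \ref{eqn:DynSysA2}. At $(0,1)$, after the vanishing factors are set to zero, one diagonal entry is linear in $a$ (with coefficient determined by $n_p$) and the other is linear in $n_p$ alone; requiring both to be negative yields exactly $n_p < 8/11$ together with $a < \tfrac{5}{4}\,n_p/(1-n_p)$. The identical procedure at $(1,0)$ produces two analogous linear eigenvalues, whose joint negativity gives the stated bounds on $n_p$ and $a$. Because Theorem 3.1 of \cite{Verh06} is an if-and-only-if statement at hyperbolic fixed points, the reverse implications are automatic, with the boundary cases (a zero eigenvalue) corresponding to the non-hyperbolic, degenerate regime excluded by the strict inequalities.

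The main obstacle, such as it is, is purely sign bookkeeping: carefully tracking the minus signs introduced by the factor $(-1+x)$ in $\dot{x}$ and by the leading minus sign of $\dot{z}$ in Equation \ref{eqn:DynSysA2}, and then rearranging the resulting linear inequalities in $a$ and $n_p$ (dividing by $1-n_p>0$) into the clean form stated in the lemma. No genuine analytic difficulty arises beyond what Verhulst's theorem already packages for us; the entire argument is an exercise in applying a standard hyperbolic-stability criterion to a Jacobian that the algebraic structure of the system has conveniently rendered diagonal.
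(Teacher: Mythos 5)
Your proposal is correct and follows essentially the same route as the paper, which simply invokes Theorem 3.1 of \cite{Verh06} after linearizing the reduced system at $(0,1)$ and $(1,0)$; your observation that the Jacobian is diagonal at these points (since $\dot{x}$ is independent of $z$ and the factor $z(z-1)$ kills $\partial\dot{z}/\partial x$ there) is exactly the computation the paper leaves implicit, and the resulting eigenvalue sign conditions reproduce the stated inequalities.
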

From these lemmas and Corollary \ref{thm:Interior}, we have the following theorem:
\begin{theorem} Suppose $n_p > \tfrac{4}{5}$ and 
\begin{displaymath}
\frac{2}{5}\,{\frac {5\,{\it n_p}-3}{1-{\it np}}} < a < \frac{2}{5}\,{\frac {3-{\it n_p}}{1-{\it np}}}
\end{displaymath}
The basin of attraction for $x=1,z=1$ is the set of $(x,z)$ pairs so that:
\begin{gather*}
\tfrac{1}{3}{\tfrac {5{\it n_p}+4\,{\it n_p}\,a-4\,a}{{\it n_p}}} < x \leq 1\\
0 < z < 1 
\end{gather*}
and the basin of attraction for $x=0,z=0$ is:
\begin{gather*}
0\leq < x < \tfrac{1}{3}{\tfrac {5{\it n_p}+4\,{\it n_p}\,a-4\,a}{{\it n_p}}}\\
0 < z < 1 
\end{gather*}
\label{thm:Basin}
\end{theorem}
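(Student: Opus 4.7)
My plan is to exploit the critical structural feature of Equations \ref{eqn:DynSysA1}--\ref{eqn:DynSysA2}: the right-hand side of the $x$-equation depends only on $x$, so $x(t)$ may be treated as an autonomous scalar ODE (already analyzed in the first of the two preceding lemmas) and the $z$-equation as a non-autonomous scalar ODE driven by $x(t)$. I then glue the asymptotic behaviors together.

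First, I will verify that the hypotheses place us in a regime in which the two target equilibria $(1,1)$ and $(0,0)$ are both stable and the other two corners are not. The conditions $n_p > \tfrac{4}{5}$ and $\tfrac{2}{5}\tfrac{5n_p-3}{1-n_p} < a < \tfrac{2}{5}\tfrac{3-n_p}{1-n_p}$ fall into the third case of Corollary \ref{thm:Interior}, so the interior equilibrium is non-spurious with $x^* \in (0,1)$. Short arithmetic checks show that the lower bound on $a$ implies $a > \tfrac{n_p}{2(1-n_p)}$ (using $n_p \geq 4/5$) and the upper bound implies $a < \tfrac{5n_p}{4(1-n_p)}$ (using $n_p \geq 8/11$), so by Theorems \ref{thm:StableEq} and \ref{thm:UnstableEq} both $(1,1)$ and $(0,0)$ are stable; the second preceding lemma rules out stability of $(0,1)$ and $(1,0)$ since $n_p > 4/5 > 8/11$.

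Next, I invoke the first preceding lemma on the autonomous $x$-equation: $x(t) \to 1$ if $x(0) > x^*$ and $x(t) \to 0$ if $x(0) < x^*$, with monotone convergence visible from the factorization $\dot x = \tfrac{3n_p}{4}\,x(x-1)(x^*-x)$. It remains to show that $z(t)$ converges to the corresponding limit. I rewrite the $z$-equation as $\dot z = \tfrac14 z(1-z)\, h(x,z)$, where $h(x,z) := 16(1-n_p)z + 5 n_p x + 5 n_p - 8$ is linear in $z$. Substituting the limiting values of $x$, I obtain $h(1,z) \geq 10 n_p - 8 > 0$ uniformly on $z \in [0,1]$ because $n_p > 4/5$, while $h(0,z)$ has endpoint values $5 n_p - 8 < 0$ and $8 - 11 n_p < 0$ (the second using $n_p > 8/11$), so $h(0,z) < 0$ uniformly on $[0,1]$.

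The main technical point, which I expect to be the chief obstacle, is transferring these fixed-$x$ sign bounds back to the non-autonomous $z$-equation. For the $(1,1)$ basin ($x(0) > x^*$): by continuity of $h$ and the convergence $x(t) \to 1$, I can pick $T$ and $c > 0$ so that $h(x(t),z) \geq c$ for all $t \geq T$ and all $z \in [0,1]$. Since $z = 0$ is an invariant line, $z(0) \in (0,1)$ implies $z(T) > 0$; hence $\dot z(t) > 0$ for all $t \geq T$, so $z(t)$ is monotone increasing and bounded above by $1$, and converges to some $z^\infty \leq 1$. If $z^\infty < 1$, then $\dot z(t) \to \tfrac14 z^\infty (1 - z^\infty)\, h(1, z^\infty) > 0$, contradicting the convergence of $z$; therefore $z^\infty = 1$. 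The symmetric argument with reversed signs, using $h(0,z) < 0$ uniformly on $[0,1]$, shows that $x(t) \to 0$ forces $z(t) \to 0$. Assembling the two cases yields exactly the rectangular basins of attraction claimed.
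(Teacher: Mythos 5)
Your proof is correct, and it is actually more complete than what the paper offers: the paper states that Theorem \ref{thm:Basin} ``follows from these lemmas and Corollary \ref{thm:Interior},'' i.e., it implicitly combines the lemma on the autonomous $x$-equation with the classification of the corner equilibria ($(1,1)$ and $(0,0)$ stable under the hypotheses, $(0,1)$ and $(1,0)$ unstable since $n_p>\tfrac45>\tfrac{8}{11}$) and leaves the convergence of $z$ unargued. You use the $x$-lemma in the same way, but instead of appealing to instability of the mixed corners you prove the $z$-convergence directly, writing $\dot z=\tfrac14 z(1-z)h(x,z)$ with $h(x,z)=16(1-n_p)z+5n_p x+5n_p-8$, checking that $h(1,\cdot)\geq 10n_p-8>0$ and $h(0,\cdot)\leq 8-11n_p<0$ on $[0,1]$, and then transferring these uniform sign bounds to the non-autonomous equation via the convergence of $x(t)$, invariance of the lines $z=0,1$, and a monotone-limit contradiction. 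This is the right way to close the logical gap, since instability of $(0,1)$ and $(1,0)$ alone does not rule out individual trajectories converging to them (they could a priori lie on a stable manifold); your uniform sign estimate does rule this out. The arithmetic reductions (the lower bound on $a$ implying $a>\tfrac{n_p}{2(1-n_p)}$ for $n_p\geq\tfrac45$, the upper bound implying $a<\tfrac{5n_p}{4(1-n_p)}$ for $n_p\geq\tfrac{8}{11}$, and the factorization $\dot x=\tfrac{3n_p}{4}x(x-1)(x^*-x)$) all check out. One small caveat, which is a defect of the theorem statement rather than of your argument: strictly speaking you (like the paper) prove that the stated rectangles are \emph{contained} in the respective basins; exact equality fails on the boundary, e.g., points with $z=0$ and $x<x^*$ also converge to $(0,0)$, so the true basins include portions of the invariant lines $z=0$ and $z=1$ excluded by the statement.
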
 
Theorem \ref{thm:Basin} is illustrated in Figure \ref{fig:Cliff1}. The more complex behaviors this system is able to exhibit yield more complex basins of attraction. However, since we anticipate $n_p > 0.9$, we have focused on this case explicitly in Theorem \ref{thm:Basin}.

This tells us that under certain conditions the final state of an (online) society (governed by these simple dynamics) can depend substantially on the initial conditions of the system. That is, if Reddit had been governed in this fashion, but the initial user group was slightly less interested in the posting of honest information (but the moderators were), then Reddit could have easily descended into a more chaotic state.

\section{Future Directions} \label{sec:Conclusion}
We discuss three future directions for research: an incentives based model, an optimal control model in which the penalty is assigned dynamically and a model in which $n_p$ is not fixed and determined by epidemic dynamics. 

\subsection{Incentives Based Model}
As noted, in the literature review, for online communities, punishments, although applied, are shown not to be truly effective \cite{SM12}. In this instance, incentives are applied to encourage good behavior. We can modify the payoff matrix $\mathbf{B}$ as:
\begin{equation}
\mathbf{B} = \begin{bmatrix}a & \frac{a}{2} \\ 0 & -1\end{bmatrix}
\label{eqn:NewB}
\end{equation}
In this scenario, we can vary $a$ to adjust the payoff received by a user engaged in cooperative behavior. The following result is immediate:
\begin{theorem} When $\mathbf{B}$ is given by Equation \ref{eqn:NewB} and the remaining payoff matrices are held constant, the new set of differential equations has four equilibria: $x=0$, $z=0$, $x=1$, $z=1$, $x=1$, $z=0$ and $x=0$, $z=1$. Furthermore, $x=1$, $z=1$ is stable just in case:
\begin{equation}
\frac{1}{2}\frac{n_p}{1-n_p} < a
\end{equation}
and $x=0$,$z=0$ is stable just in case:
\begin{equation}
a < \frac{1}{2}\frac{9n_p-4}{1-n_p}
\end{equation}
\end{theorem}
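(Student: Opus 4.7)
The plan is to parallel the analysis of Section \ref{sec:EvolutionaryGame}. First, following the reduction used to obtain Equations \ref{eqn:DynSysA1} and \ref{eqn:DynSysA2}, I would derive the reduced two-variable system on the invariant manifold $x+y=1$, $z+w=1$ for the incentive model. Since only $\mathbf{B}$ has changed and $\mathbf{A},\mathbf{C},\mathbf{F}$ are unchanged, $\dot{z}$ is identical to Equation \ref{eqn:DynSysA2}, while $\dot{x}$ inherits the prisoner's dilemma contribution from $\mathbf{A}$ and acquires a new cross-population term. A direct calculation shows that with the new $\mathbf{B}$ the cross-population quantity $(\mathbf{B}\boldsymbol{\eta})_1-\boldsymbol{\xi}^T\mathbf{B}\boldsymbol{\eta}$ still factors with an explicit $(1-x)$, so $\dot{x}$ retains the form $x(1-x)\,p(x,z)$ for a polynomial $p$ that is linear in each of its arguments.

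Second, I would enumerate the equilibria. The four corners $(0,0),(0,1),(1,0),(1,1)$ are immediate from the $x(1-x)z(1-z)$ factorization. Any remaining candidates satisfy the $2\times 2$ linear system obtained by simultaneously setting the inner bracket factors of $\dot{x}$ and $\dot{z}$ to zero; the claim that only the four corners appear amounts to showing that the unique solution of that linear system lies outside $(0,1)^2$, together with the analogous check for edge candidates at $x\in\{0,1\}$ or $z\in\{0,1\}$. This reduces to a finite collection of parameter-by-parameter sign checks.

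Third, for the stability statements, I would linearize at $(1,1)$ and at $(0,0)$. Because both right-hand sides factor as $x(1-x)(\cdot)$ and $z(1-z)(\cdot)$, the Jacobian is diagonal at each corner, with eigenvalues obtained by evaluating the bracket expressions at the corner (with the appropriate sign from the derivative of $x(1-x)$ or $z(1-z)$). Imposing that both eigenvalues be negative at $(1,1)$ and invoking Theorem 3.2 of \cite{Verh06}, as was done for Theorem \ref{thm:StableEq}, yields $\tfrac{1}{2}\tfrac{n_p}{1-n_p}<a$; the identical procedure at $(0,0)$ produces the companion threshold $a<\tfrac{1}{2}\tfrac{9n_p-4}{1-n_p}$.

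The main obstacle I anticipate is the uniqueness portion of the equilibrium count. The corner computations and the diagonal Jacobian analyses are routine once the factorization of $\dot{x}$ is in hand, but the $\dot{z}$ factor is inherited unchanged from the punitive model, which in Theorem \ref{thm:Equil} contributed non-spurious edge equilibria on $\{x=0\}$ and $\{x=1\}$ for certain parameter ranges. Verifying that none of these persist in the incentive model, or that any which do persist lie outside $[0,1]^2$ in the parameter regime relevant to the stability statement, is where the careful bookkeeping is concentrated; the linear algebra needed for the stability thresholds themselves is otherwise almost mechanical.
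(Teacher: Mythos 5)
Your reduction and your stability analysis are sound, and they reproduce what the paper must have done (the paper offers no proof, calling the result immediate). With $\mathbf{B}$ as in Equation \ref{eqn:NewB} one gets $\dot{x}=x(1-x)\bigl[\tfrac{n_p}{4}(3x-5)+n_c\bigl(\tfrac{a}{2}(1+z)+1-z\bigr)\bigr]$ while $\dot{z}$ is literally Equation \ref{eqn:DynSysA2}, so at each corner the Jacobian is diagonal, as you say. At $(1,1)$ the eigenvalues are $\tfrac{1}{2}n_p-(1-n_p)a$ and $-2+\tfrac{3}{2}n_p$ (the latter always negative), giving exactly $a>\tfrac{1}{2}\tfrac{n_p}{1-n_p}$; at $(0,0)$ they are $-\tfrac{5}{4}n_p+(1-n_p)\bigl(\tfrac{a}{2}+1\bigr)$ and $\tfrac{1}{4}(5n_p-8)$ (the latter always negative), giving exactly $a<\tfrac{1}{2}\tfrac{9n_p-4}{1-n_p}$. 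So the stability half of your plan is correct and essentially forced.

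The gap is in the equilibrium count, and it is not mere bookkeeping: the sign checks you propose come out the wrong way. Because $\dot{z}$ is unchanged, the old edge equilibria $x=0,\ z=\tfrac{8-5n_p}{16(1-n_p)}$ and $x=1,\ z=\tfrac{4-5n_p}{8(1-n_p)}$ still lie in $(0,1)$ whenever $n_p<\tfrac{8}{11}$, respectively $n_p<\tfrac{4}{5}$. Worse, the new $x$-bracket itself vanishes on the edges: on $z=1$ at $x=\tfrac{5n_p-4(1-n_p)a}{3n_p}\in(0,1)$ whenever $\tfrac{n_p}{2(1-n_p)}<a<\tfrac{5n_p}{4(1-n_p)}$, and on $z=0$ at $x=\tfrac{5n_p-(1-n_p)(2a+4)}{3n_p}$, which for the paper's own illustrative values $n_p=0.9$, $a=15$ (Figure \ref{fig:Incentives}) equals roughly $0.41\in(0,1)$. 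Hence the literal claim of exactly four equilibria cannot be established by showing all other candidates are spurious, and your fallback of restricting to ``the parameter regime relevant to the stability statement'' does not help, since the stability inequalities impose no such regime. To make your second step work you must either add explicit parameter hypotheses that exclude these edge roots, or read the theorem's count in the weaker sense of Theorem \ref{thm:Equil} (equilibria present for all parameter values, the rest possibly spurious). You correctly located the delicate point, but as written that step of the proposal would fail rather than merely require care.
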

Example dynamics for the incentivizing game are shown in Figure \ref{fig:Incentives}
\begin{figure}
\centering
\includegraphics[scale=0.3]{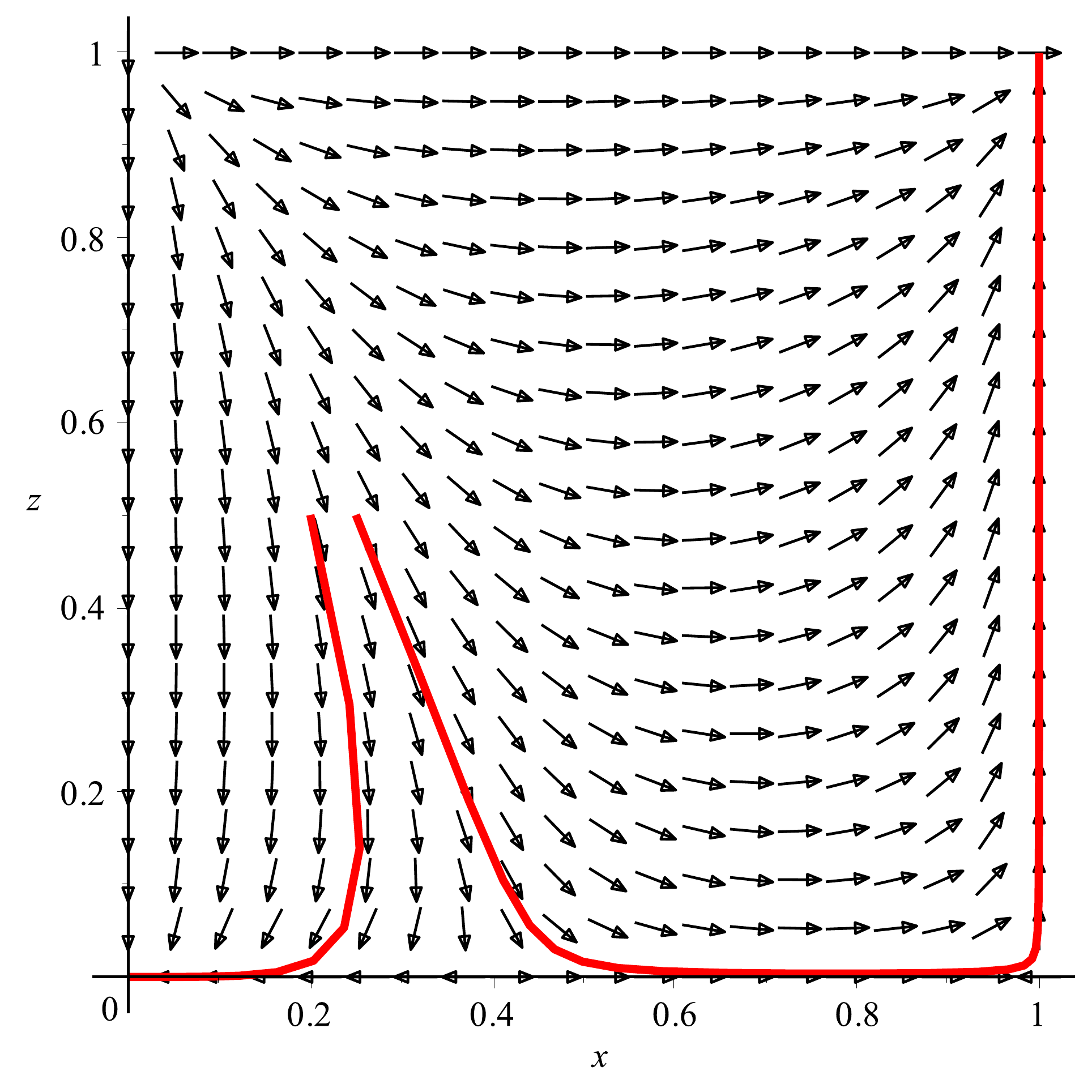}
\caption{Dynamics for the incentives game in which $a=15$ and $n_p = 0.9$.}
\label{fig:Incentives}
\end{figure}
Under these dynamics, the basin of attraction for the utopian solution $(x=1,z=1)$ is difficult to identify in closed form, however you will observe he have dynamics similar to those shown in Figure \ref{fig:Cliff2}. Additional work on this problem might yield interesting conditions on the incentive structures for encouraging stable and beneficial social networks.

\subsection{An Optimal Control Problem}
In most Social Networking sites, the number of moderators is static (that is, both $n_p$ and $n_c$ are fixed). For the remainder of this discussion, we will assume our original dynamics, rather than the incentivizing behavior describe above. If we can measure $x(0)$ (and $y(0)$), we would like to identify a time varying optimal value for $a$ (the penalty) so that $x^* = 1, z^* = 1$ is an attractor that is reachable from $x(0)$. However, an $a$ that is too large may cripple the social network (in a way not captured by the dynamics in this paper). We can phrase this problem as a finite (or infinite) time horizon optimal control problem:
\begin{equation}
\begin{aligned}
\min\;\; & \int_0^T (1-x(t))^2 + (1-z(t))^2 + a(t)^2 dt\\
s.t.\;\; & \dot{x} = \frac{1}{4}\,x \left( -1+x \right)  \left( -3{\it n_p}\,x+5\,{\it n_p}+4\,{\it n_p}\,a-4\,a \right)\\
& \dot{z} = -\frac{1}{4}z \left( -1+z \right)  \left( -16{\it n_p}\,z+16\,z+5\,{\it n_p}
-8+5{\it n_p}\,x \right) \\
& x(0) = x_0, z(0) = z_0\\
& a(t) \in [0,\infty)
\end{aligned}
\label{eqn:OptimalControl}
\end{equation}
This problem will have a Hamiltonian \cite{Ki04} that is quadratic in $a$ and thus may admit a non bang-bang solution. Study of this problem is reserved for future work. 

\subsection{Time Varying Population Proportions}
As a second generalization of this problem, consider the case where $n_p$ is not static. We can model this scenario using epidemic dynamics in which users become moderators in a manner consistent with an infection:
\begin{gather*}
\dot{n}_p = \lambda + \rho n_c-\beta n_pn_c - \mu n_p\\
\dot{n}_c = \beta n_p n_c - \rho n_c - \mu n_c 
\end{gather*}
If we assume a stable population, then $\lambda = \mu$. Here $\beta$ is the infection rate, while $\rho$ is a recovery rate that leads back to a susceptible state. Using these differential equations with the equations from (\ref{eqn:DynSysA2} - \ref{eqn:DynSysA2}) yields a more realistic dynamic. We can also define a more complex optimal control problem in which we attempt to find values for $\beta$ and $a$ that minimize the objective functional of Expression \ref{eqn:OptimalControl}.

\section{Conclusions}
In this paper we constructed a simple model of a self-regulating system that describes (to some degree) the behavior of a moderated online community. We showed that we need only two differential equations to describe a complex two-population evolutionary game, rather than the six that would be used following the analytical techniques described in \cite{Hof96}. We completely described the nature of the evolutionary dynamics of the proposed system and illustrated how the level of incentive (or penalty) associated with meeting a moderator can effect the limiting behavior of the system. These results are related to the online system Reddit. We also discuss future work in which the population structure is varied and we posed an optimal control problem that is relevant to the mechanism design for social networks.

\vspace*{5em}

\bibliographystyle{IEEEtran}
\bibliography{References,bib}

\end{document}